\documentclass[a4paper,UKenglish,cleveref,autoref,thm-restate]{oasics-v2021}



\usepackage[utf8]{inputenc}
\usepackage{xcolor}

\definecolor{keywordcolor}{rgb}{0.7, 0.1, 0.1}   
\definecolor{tacticcolor}{rgb}{0.0, 0.1, 0.6}    
\definecolor{commentcolor}{rgb}{0.4, 0.4, 0.4}   
\definecolor{symbolcolor}{rgb}{0.0, 0.1, 0.6}    
\definecolor{sortcolor}{rgb}{0.1, 0.5, 0.1}      
\definecolor{attributecolor}{rgb}{0.7, 0.1, 0.1} 
\definecolor{white}{rgb}{1.0,1.0,1.0}

\usepackage{xspace}
\usepackage{nicefrac}
\usepackage{caption}
\captionsetup[table]{name=Listing}

\usepackage{tikz}

\usepackage[inline,shortlabels]{enumitem} 
\newlist{inlinelist}{enumerate*}{1}
\setlist*[inlinelist,1]{%
  label=(\roman*),
}

\lstset{
    language=lean,
    numbers=left,
    backgroundcolor=\color{white},
    escapeinside={@}{@},
    captionpos=t
}

\hypersetup{
    colorlinks=true,  
    urlcolor=blue,
}

\usepackage[final,nomargin,inline,index]{fixme} 
\fxusetheme{color}
\FXRegisterAuthor{bart}{anbart}{\color{magenta} {\underline{bart}}}
\FXRegisterAuthor{dani}{andani}{\color{red} {\underline{dani}}}

\usepackage{cleveref}
\crefname{table}{Listing}{Listings}  

\newcommand{\codefont}{\fontsize{10}{10}\selectfont}
\newcommand{\code}[1]{{\tt\codefont {#1}}}

\newcommand{\eg}{e.g.\@\xspace}
\newcommand{\ie}{i.e.\@\xspace}

\bibliographystyle{plainurl}

\title{Formalizing Automated Market Makers in the Lean~4 Theorem Prover}

\titlerunning{Formalizing Automated Market Makers in the Lean 4 Theorem Prover} 
\author{Daniele Pusceddu}{ETH Zurich, Switzerland \and University of Cagliari, Italy }{dpusceddu@student.ethz.ch}{}{}

\author{Massimo Bartoletti}{University of Cagliari, Italy \and \url{http://blockchain.unica.it} }{bart@unica.it}{https://orcid.org/0000-0003-3796-9774}{Partially supported by project SERICS (PE00000014) and PRIN 2022 DeLiCE (F53D23009130001) under the MUR National Recovery and Resilience Plan funded by the European Union -- NextGenerationEU.}

\authorrunning{D. Pusceddu and M. Bartoletti} 

\Copyright{Daniele Pusceddu and Massimo Bartoletti} 

\ccsdesc[500]{Software and its engineering~Formal methods}
\ccsdesc[500]{Software and its engineering~Formal software verification}


\keywords{Smart contracts, Ethereum, Verification, Blockchain}

\category{} 

\relatedversion{} 




\nolinenumbers 


\begin{document}

\maketitle

\begin{abstract}
Automated Market Makers (AMMs) are an integral component of the decentralized finance (DeFi) ecosystem, as they allow users to exchange crypto-assets 
without the need for trusted authorities or external price oracles. 
Although these protocols are based on relatively simple mechanisms, 
\eg to algorithmically determine the exchange rate between crypto-assets, 
they give rise to complex economic behaviours. 
This complexity is witnessed by the proliferation of models that study their structural and economic properties. 
Currently, most of theoretical results obtained on these models 
are supported by pen-and-paper proofs.
This work proposes a formalization of constant-product AMMs in the Lean~4 Theorem Prover.
To demonstrate the utility of our model, 
we provide mechanized proofs of key economic properties
like arbitrage, that at the best of our knowledge have only been proved 
by pen-and-paper before.
\end{abstract}

\section{Introduction}
\label{sec:intro}

Automated Market Makers (AMMs) are one of the key applications in the Decentralized Finance (DeFi) ecosystem,
as they allow users to trade crypto-assets without the need for trusted intermediaries~\cite{Xu23csur}. 
Unlike traditional order-book exchanges, where buyers and sellers must find a counterpart, AMMs enable traders to autonomously swap assets deposited in liquidity pools contributed by other users, who are incentivized to provide liquidity by a complex reward mechanism. 
At the time of writing, there are multiple AMM protocols controlling several billions of dollars worth of assets%
\footnote{\url{https://defillama.com/protocols/Dexes}}. 
This has made AMMs an appealing target for attacks, resulting in losses worth billions of dollars over time\footnote{\url{https://chainsec.io/defi-hacks/}}.

The security of AMMs depends on several factors: besides the absence of traditional programming bugs, it is crucial that their economic mechanism gives rise to a rational behaviour of its users that aligns with the AMM ideal functionality, \ie providing an algorithmic exchange rate coherent with the one given by trusted price oracles. 
Therefore, it is important to obtain strong guarantees about the economic mechanisms of these protocols. 
While formal verification tools for smart contracts based on model-checking are useful in detecting programming bugs and even in proving some structural properties of AMMs~\cite{certora-compound-v2,certora-compound-v3}, they are not suitable for verifying, or even expressing more complex properties regarding the economic mechanism of AMMs.
These economic mechanisms have been studied in several research works, which, in most cases, provide pen-and-paper proofs of the obtained properties.
Given the complexity of the studied models, it would be desirable to also provide machine-verified proofs, so that we may rely on the proven properties beyond any reasonable doubt. 
To the best of our knowledge, existing mechanized formalizations~\cite{Nielsen23cpp} focus on verifying relevant structural properties of AMMs like their state consistency, and not on studying the economic mechanism of AMMs (see~\Cref{sec:related} for a detailed comparison).

\paragraph*{Contributions}
In this paper we formalize Automated Market Makers in the Lean 4 theorem prover.
Our model is based on a slightly simplified version of the Uniswap v2 protocol (one of the leading AMMs), which was studied in~\cite{Bartoletti_2022} with a pen-and-paper formalization. We provide a Lean specification of blockchain states, abstracted from any factors that are immaterial to the study of AMMs. Then, we model the fundamental interactions that users may have with AMMs as well as the economic notions of price, networth and gain. Finally, we build machine-checked proofs of economic properties of constant-product AMMs. In particular, we derive an explicit formula for the economic gain obtained by a user after an exchange with an AMM. Building upon this formula we prove that, from a trader's perspective, aligning a constant-product AMM's internal exchange rate with the rate given by the trader's price oracle implies the optimal gain from that AMM. This results in the fundamental property of AMMs acting as price oracles themselves~\cite{Angeris20aft}. We then construct the optimal swap transaction that a rational user can perform to maximize their gain, solving the arbitrage problem. Our formalization and proofs\footnote{\url{https://github.com/danielepusceddu/lean4-amm}} are made available in a public GitHub repository.
At the best of our knowledge, this is the first mechanized formalization of the economic mechanism of AMMs.
We finally discuss some open issues, and alternative design choices for formalizing AMMs.

\section{Formalization}
\label{sec:blockchain}

An Automated Market Maker implements a decentralized exchange between two different token types. The exchange rate is determined by a smart contract, which also takes care of performing the exchange itself: namely, the contract receives from a trader some amount of the input token type, and sends back the correct amount of the output token type, which is taken from the AMM reserves. A single smart contract can control many instances of AMMs (also called \emph{AMM pairs}): we may have a pair for each possible unordered pair of token types. To create an AMM instance, a user must provide the initial liquidity for the reserves of that pair of tokens. Liquidity providers are rewarded with a type of token that specifically represents shares in that AMM's reserves: we call these \emph{minted} token types, while any other token type will be called \emph{atomic}.

\paragraph*{Blockchain State}

We begin by formalizing the blockchain state, abstracting from the details that are immaterial to the study of AMMs. 
Then, our model includes the users' wallets, the AMMs and their reserves
(see~\Cref{tab:amm-system}).
We formalize the universes of users and atomic token types as the types \code{A} and \code{T}, respectively, as structures that encapsulate a natural number.
Hereafter, we use $a, b, \ldots$ to denote users in $\code{A}$, 
and $\tau, \tau_0, \tau_1, \ldots$ to denote tokens in $\code{T}$.
Minted token types are pairs of \code T. 
We represent the funds owned by a user by a \emph{wallet} that maps token types to non-negative reals.
To rule out wallets with infinite tokens,
we use Mathlib's finitely supported functions\footnote{\url{https://leanprover-community.github.io/mathlib4_docs/Mathlib/Data/Finsupp/Defs.html}}: 
in general,
given any type $\alpha$ and any type $M$ with a $0$ element, $f \in \alpha \to_0 M$ if $\mathrm{supp}\left(f\right) = \left\{x \in \alpha \mid f\left(x\right) \neq 0\right\}$ is finite.  

We define the type $\code W_0$ of wallets of atomic tokens as a structure encapsulating $\code T \to_0 \mathbb R_{\geq 0}$. This definition induces an element $0 \in \code W_0$ such that $\mathrm{supp}\left(0\right) = \emptyset$: this is the \emph{empty wallet}, which enables us to form the type $\code T \to_0 \code W_0$. 
We define the type $\code W_1$ of wallets of minted tokens 
as a structure that encapsulates a function $\code{bal} \in \code T \to_0 \code{W_0}$. 
The intuition is that $\code{bal}\ \tau_0\ \tau_1$ gives the owned amount of the minted token type created by the AMM pair with tokens $\tau_0$ and $\tau_1$.
Consistently, the function \code{bal} must satisfy two conditions: 
$\code{bal}\ \tau_0\ \tau_1 = \code{bal}\ \tau_1\ \tau_0$, 
meaning that the order of atomic tokens is irrelevant, 
and $\code{bal}\ \tau\ \tau = 0$, meaning that the two token types 
in an AMM must be distinct. 
Our definition of $\code W_1$ encapsulates proofs of these two properties, called \code{unord} and \code{distinct}, respectively.

We map users to their wallets with the types $\code S_0$ and $\code S_1$,
which account for the atomic tokens and for the minted tokens,
respectively.
Finally, we formalize sets of AMM pairs with the type \code{AMMs}. 
The definition is strikingly similar to that of $\code W_1$, but with a changed constraint. 
The intuition is that $\code{res}\ \tau_0\ \tau_1$ gives the reserves of $\tau_0$ in the AMM pair $(\tau_0$, $\tau_1)$,
while  $\code{res}\ \tau_1\ \tau_0$ gives the reserves of $\tau_1$ in the same AMM pair. 
For uninitialized AMM pairs, the obtained reserves must be $0$. 
The property \code{posres} ensures that either an AMM pair has 
no reserves of both token types (\ie, the AMM has not been created yet),
or both token types have strictly positive reserves
(\ie, one cannot deplete the reserves of a single token type in an AMM pair).
We combine the previous definitions in the type $\Gamma$, which represents the state of a blockchain (note that $\Gamma$ abstracts from all the details immaterial for AMMs). 

\paragraph*{Token supply}

Given a blockchain state $s \in \Gamma$, we define the supply of an atomic token type $\tau_0$ as:
\[
\code{atomsupply}_s\left(\tau_0\right) 
\;\; = \;\; 
\sum_{a \in \mathrm{supp}(s\code{.atoms})} 
\hspace{-15pt}
(s\code{.atoms} \; a) \; \tau_0 
\;\; + \;\;
\sum_{\tau_1 \in \mathrm{supp}(s\code{.amms} \, \tau_0)} 
\hspace{-20pt}
(s\code{.amms} \; \tau_0) \; \tau_1
\]
where the partial application $s\code{.amms} \; \tau_0$ 
gives a map with all AMM pairs with $\tau_0$ as one of their token types.
We define the supply of a minted token type $(\tau_0,\tau_1)$ as follows:
\[
\code{mintsupply}_s\left(\tau_0, \tau_1\right) 
\;\; = \;\; 
\sum_{a \in \mathrm{supp}(s\code{.mints})} 
\hspace{-15pt}
(s\code{.mints} \; a) \; \tau_0 \;\tau_1
\]

The corresponding Lean definitions (in \Cref{list:supplydefs}) 
have been split in order to facilitate theorem proving and, in particular, the use of Lean's simplifier.

\begin{table}[t]
    \caption{Fundamental Lean definitions for the state of an AMM system.}
    \label{tab:amm-system}
    \vspace{-20pt}
    \centering
    \begin{tabular}{p{0.25\linewidth}p{0.30\linewidth}p{0.45\linewidth}}
    \begin{lstlisting}[numbers=none]
structure A where
  n: ℕ

structure T where
  n: ℕ

structure W₀ where
  bal: T →₀ ℝ≥0
  
structure S₀ where
  map: A →₀ W₀\end{lstlisting} & 
    \begin{lstlisting}[numbers=none]
structure W₁ where
  bal: T →₀ W₀
  unord: ∀ (τ₀ τ₁: T), 
    bal τ₀ τ₁ = f τ₁ τ₀
  distinct: ∀ (τ: T), 
    bal τ τ = 0
  
structure S₁ where
  map: A →₀ W₁ \end{lstlisting} &
    \begin{lstlisting}[numbers=none]
structure AMMs where
  res: T →₀ W₀
  distinct: ∀ (τ: T), 
    res τ τ = 0
  posres: ∀ (τ₀ τ₁: T), 
    res τ₀ τ₁ ≠ 0  ↔ f τ₁ τ₀ ≠ 0

structure Γ where
  atoms: S₀
  mints: S₁
  amms: AMMs\end{lstlisting}
    \end{tabular}
\end{table}

\begin{table}[t]
\caption{Supply of atomic token types and of minted token types.}
\label{list:supplydefs}
\begin{lstlisting}
noncomputable def S₀.supply (s: S₀) (τ: T): ℝ≥0 := s.map.sum (λ _ w => w τ)

noncomputable def S₁.supply (s: S₁) (τ₀ τ₁: T): ℝ≥0 := 
    s.map.sum (λ _ w => w.get τ₀ τ₁)

noncomputable def AMMs.supply (amms: AMMs) (τ: T): ℝ≥0 := 
    (amms.res τ).sum λ _ x => x

noncomputable def Γ.atomsupply (s: Γ) (τ: T): ℝ≥0 := 
    (s.atoms.supply τ) + (s.amms.supply τ)

noncomputable def Γ.mintsupply (s: Γ) (τ₀ τ₁: T): ℝ≥0 := s.mints.supply τ₀ τ₁
\end{lstlisting}
\end{table}
\paragraph*{AMM reserves}

Given a blockchain state $s$ and two token types $\tau_0$, $\tau_1$,
the terms $s.\code{amms} \; \tau_0 \; \tau_1$
and $s.\code{amms} \; \tau_1 \; \tau_0$
denote, respectively, the reserves of $\tau_0$ and $\tau_1$
in the AMM pair $(\tau_0,\tau_1)$.
This way of accessing the AMM reserves is a bit impractical:
when writing proofs, using $s.\code{amms} \; \tau_0 \; \tau_1$
carries an obligation to provide the functions $\code{distinct}$ and $\code{posres}$.
In particular, this requires the user to explicitly add, in any theorem
using the reserves, the assumption that the reserves
are strictly positive to indicate the AMM pair has been created.
Furthermore, this way of accessing the reserves hides the
fact that when one of the reserves is strictly positive, 
also the other one is such, which again should be made explicit when
writing proofs.

To cope with these issues, 
we build a Lean API that allows for hiding these implementation details (see~\Cref{list:ammapi}). 
For example, given an AMM pair $(\tau_0, \tau_1)$ in a state $s$, 
the expression $s\code{.amms.r_0 \, \tau_0 \; \tau_1 \; init}$ 
give the reserves of token $\tau_0$ in the AMM, 
which are guaranteed to be strictly positive 
under the initialization precondition $\code{init} \in (s\code{.amms.init})$.

\begin{table}[t]
\caption{Fragment of the AMM API: AMM reserves.}
\label{list:ammapi}
\begin{lstlisting}
def AMMs.init (amms: AMMs) (τ₀ τ₁: T): Prop := amms.res τ₀ τ₁ ≠ 0

def AMMs.r₀ (amms: AMMs) (τ₀ τ₁: T) (h: amms.init τ₀ τ₁): ℝ>0 := ⟨ amms.res τ₀ τ₁, 
    by unfold init at h; exact NNReal.neq_zero_imp_gt h ⟩

def AMMs.r₁ (amms: AMMs) (τ₀ τ₁: T) (h: amms.init τ₀ τ₁): ℝ>0 := ⟨ amms.res τ₁ τ₀, 
    by unfold init at h; exact NNReal.neq_zero_imp_gt ((amms.posres τ₀ τ₁).mp h) ⟩
\end{lstlisting}
\end{table}

\paragraph*{Transactions}

Our model encompasses all the main types of transactions supported by AMMs: creating an AMM, adding/removing liquidity, and swapping a token for another. 
Swaps are parameterised by a \emph{swap rate function}, which determines the  exchange rate.
We use the formalization of swap transactions (\Cref{list:swapdefs}) to exemplify the scheme we used for all the transaction types. 

The type $\code{Swap}\left(sx, s, a, \tau_0, \tau_1, x\right)$ represents valid swap transactions in a blockchain state $s$, with the swap rate function $sx$, performed by user $a$ to exchange $x$ amount of the input token $\tau_0$ for a certain amount of the output token $\tau_1$ (\Cref{lst:swapsig}). 
Each element of this type is a structure containing a proof of the validity of the transaction. 
For example, for swap transactions we must prove that the user has enough amount of $\tau_0$ (condition \code{enough}), 
that the AMM pair with tokens $\tau_0$ and $\tau_1$ exists 
(condition \code{exi}), 
and it has enough reserves of $\tau_1$ to give as output 
(condition \code{nodrain}). 
Since the type $\code{Swap}\left(\cdots\right)$
is empty when the parameters do not satisfy the above conditions,
invalid transactions are not really expressible in our model. 
Instead, if $\code{Swap}\left(\cdots\right)$ represents a valid transaction, it will be a singleton type due to proof irrelevance 
(\ie, any two proofs of the same proposition are equal). 

Each transaction is equipped with an $\code{apply}$ function that yields the state reached by executing the transaction in the given state. 
For example, for $sw \in \code{Swap}\left(sx, s, a, \tau_0, \tau_1, x\right)$, 
$\code{apply}\; sw$ yields a state where:
\begin{inlinelist}
\item $a$'s atomic tokens wallet is updated by removing $x$ units of $\tau_0$ and adding $sw\code{.y}$ units of $\tau_1$ (\Cref{lst:updatoms}),
where $sw\code{.y}$ is the amount of tokens outputted by the AMM pair; 
\item accordingly, the AMM reserves are updated by removing $sw\code{.y}$ units of $\tau_1$ and adding $x$ units of $\tau_0$ (\Cref{lst:suby}); 
\item the minted token wallets is unchanged (\Cref{lst:unchangedmints}).
\end{inlinelist}
These definitions use functions and proofs 
not included in \Cref{list:swapdefs} for brevity, 
such as \code{sub} and \code{sub\_r1}. 
These are designed with the same spirit of allowing only valid operations, and so require suitable proofs.
For example, \code{sub\_r1} requires a proof of the existence of the AMM we are removing liquidity from, and a proof that the AMM pair has enough liquidity to retain a positive amount of reserves. 
We build these proofs inline using those contained in the structure:
for instance, the parameter $sw\code{.exi}$ passed to \code{sub\_r1} 
at line 11 is a proof that the AMM pair exists. 
Then, at line 16 we define the constant-product swap rate function, that is the swap rate function used by Uniswap v2. From~\Cref{th:swapdirection} onwards, our results will focus on AMMs using this swap rate function.

\begin{table}[t]
\caption{Definition of the swap transaction type and of its application, as well as the constant-product swap rate function.}
\label{list:swapdefs}
\begin{lstlisting}
abbrev SX := ℝ>0 → ℝ>0 → ℝ>0 → ℝ>0
structure Swap (sx: SX) (s: Γ) (a: A) (τ₀ τ₁: T) (x: ℝ>0) where @\label{lst:swapsig}@
  enough: x ≤ s.atoms.get a τ₀      -- user a has at least x τ₀
  exi: s.amms.init τ₀ τ₁            -- AMM pair τ₀ τ₁ exists in s
  nodrain: x*(sx x (s.amms.r₀ τ₀ τ₁ exi) (s.amms.r₁ τ₀ τ₁ exi))
           < (s.amms.r₁ τ₀ τ₁ exi)  -- AMM has enough output tokens

def Swap.y (sw: Swap sx s a τ₀ τ₁ x): ℝ>0 := 
    x*(sx x (s.amms.r₀ τ₀ τ₁ sw.exi) (s.amms.r₁ τ₀ τ₁ sw.exi))

noncomputable def Swap.apply (sw: Swap sx s a τ₀ τ₁ x): Γ := {
  atoms := (s.atoms.sub a τ₀ x sw.enough).add a τ₁ sw.y, @\label{lst:updatoms}@
  mints := s.mints, @\label{lst:unchangedmints}@
  amms  := (s.amms.sub_r₁ τ₀ τ₁ sw.exi sw.y sw.nodrain).add_r₀ τ₀ τ₁ (by simp[sw.exi]) x } @\label{lst:suby}@

noncomputable def SX.constprod: SX := λ (x r₀ r₁: ℝ+) => r₁/(r₀ + x)
\end{lstlisting}
\end{table}

\paragraph*{Price, networth and Gain}

An important aim of our model is to state and prove economic properties of AMMs related to the networth of their users. The fundamental definitions are in~\Cref{list:networth}. Given a wallet $w \in \code W_0$ and an atomic token price oracle $o \in \code T \to \mathbb R_{>0}$, we define the \emph{value} of $w$ in \cref{lst:atomicworthdef} as:
\[
\code{value}\left(w, o\right) = \sum\limits_{\tau \in \mathrm{supp}\left(w\right)} w\left(\tau\right) \cdot o\left(\tau\right)
\] 
The value of a wallet of minted tokens $w \in \code W_1$
is defined similarly, except that: 
\begin{inlinelist}
\item the summation ranges over $\mathrm{supp}\left(w\code{.u}\right)$, with $w\code{.u} \in \code T^2 \to_0 \mathbb R_{\geq 0}$ representing the uncurrying of $w$;
\item the summation is divided by 2 since, if $\left(\tau_0, \tau_1\right)$ is in the support of $w$, then also $\left(\tau_1, \tau_0\right)$ is in its support.
\end{inlinelist}
For uniformity with the definition of value of atomic wallets, 
also here we assume an oracle that gives the price of (minted) tokens. 
However, while for pricing atomic tokens we indeed resort to an oracle,
for minted tokens this oracle is instantiated to a specific function,
coherently with~\cite{Bartoletti_2022}: 
\[
\code{mintedprice}_s(o, \tau_0, \tau_1) = \frac
    {(s.\code{amms.r_0} \; \tau_0 \; \tau_1) \cdot o(\tau_0) + (s.\code{amms.r_1} \; \tau_0 \; \tau_1) \cdot o(\tau_1)}
    {\code{mintsupply}_s(\tau_0,\tau_1)}
\]
where we have omitted the initialization precondition $\code{h}$ for brevity.

We then define the \emph{networth} of a user as the sum of the value of their two types of wallets (\cref{lst:networthsig}). 
The \emph{gain} of a user upon an update of the blockchain state is the difference between the networth in the new state and that in the old state (\cref{lst:gaindef}).

\begin{table}[t]
\caption{Users' networth and gain.}
\label{list:networth}
\begin{lstlisting}
noncomputable def W₀.value (w: W₀) (o: T → ℝ>0): ℝ≥0 := @\label{lst:atomicworthsig}@
  w.sum (λ τ x => x*(o τ)) @\label{lst:atomicworthdef}@

noncomputable def W₁.value (w: W₁) (o: T → T → ℝ≥0): ℝ≥0 :=
  (w.u.sum (λ p x => x*(o p.fst p.snd))) / 2

noncomputable def Γ.mintedprice (s: Γ) (o: T → ℝ>0) (τ₀ τ₁: T): ℝ≥0 := @\label{lst:mintedpricesig}@
  if h:s.amms.init τ₀ τ₁ then
  ((s.amms.r₀ τ₀ τ₁ h)*(o τ₀)+(s.amms.r₁ τ₀ τ1 h)*(o τ₁)) / (s.mints.supply τ₀ τ₁)
  else 0 -- price is zero if AMM is not initialized

noncomputable def Γ.networth (s: Γ) (a: A) (o: T → ℝ>0): ℝ≥0 := @\label{lst:networthsig}@
  (W₀.value (s.atoms.get a) o) + (W₁.value (s.mints.get a) (s.mintedprice o))

noncomputable def A.gain (a: A) (o: T → ℝ>0) (s s': Γ): ℝ :=
  ((s'.networth a o): ℝ) - ((s.networth a o): ℝ) @\label{lst:gaindef}@
\end{lstlisting}
\end{table}

\paragraph*{Reachable states}

To formalize reachable states, we begin by defining sequences of transactions (\cref{tab:reachable}). 
$\code{Tx}\left(sx,s,s'\right)$ is the type of sequences of valid transactions
(of any kind)
starting from state $s$ and leading to $s'$.
The parameter $sx$ is the swap rate function used in swap transactions.
Technically, $\code{Tx}\left(sx,s,s'\right)$
is an instance of the indexed family of dependent types \code{Tx}, dependent on $sx$ and $s$, and indexed by $s'$. In practice, this means that the constructors must preserve the values of $sx$ and $s$ (building upon the sequence of transactions does not change the swap rate function being used nor the originating state), while $s'$ may change after each construction (the state resulting from the sequence changes with each transaction that is added to it). 
A state $s'$ is \emph{reachable} if there exists a valid sequence of transactions that reaches $s'$ starting from a valid \emph{initial} state $s$, \ie a state with no initialized AMMs or minted token types in circulation.

\begin{table}[t]
    \caption{Sequences of transactions and reachable states.}
    \label{tab:reachable}
    \begin{lstlisting}
inductive Tx (sx: SX) (init: Γ): Γ → Type where @\label{lst:txdef}@
  | empty: Tx sx init init

  | swap (s': Γ) (rs: Tx sx init s')
         (sw: Swap sx s' a τ₀ τ₁ v0):
      Tx sx init sw.apply
  -- Other constructors omitted for brevity
  
def validInit (s: Γ): Prop :=
  (s.amms = AMMs.empty ∧ s.mints = S₁.empty)

def reachable (sx: SX) (s: Γ): Prop :=
  ∃ (init: Γ) (tx: Tx sx init s), validInit init
\end{lstlisting}
\end{table}
\section{Results}
\label{sec:results}
We now present some noteworthy properties of AMMs that we have proven in Lean. 

\Cref{th:ammreachable} ensures that, in any reachable state, there exists an AMM with token types $\tau_0$ and $\tau_1$ if and only if the minted token type $(\tau_0, \tau_1)$ is in circulation, \ie it has a strictly positive supply.
This result showcases the validity of our model with regards to reasoning about reachable states. 
Technically, it also allows us to prove that $\code{mintedprice}_s(o, \tau_0, \tau_1)$ is strictly positive for any initialized AMM pair $(\tau_0, \tau_1)$ in any reachable state $s$.

\begin{proposition}[Existence of AMMs \emph{vs.} minted token supply]
\label{th:ammreachable}
Let $s' \in \Gamma$ be a reachable blockchain state. 
Then, for any minted token type $\left(\tau_0, \tau_1\right)$, its supply in $s'$ is strictly positive if and only if $s'$ has an AMM with token types $\tau_0$ and $\tau_1$.
\end{proposition}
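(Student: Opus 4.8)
The plan is to recognise the stated biconditional as an invariant that holds in every initial state and is preserved by every transaction, and to prove it by induction on the transaction sequence. Unfolding \code{reachable}, from $s'$ reachable we obtain an initial state $init$ with $\code{validInit}\ init$ together with a derivation $tx : \code{Tx}\ sx\ init\ s'$. I would then induct on the structure of $tx$, keeping the minted token type $(\tau_0,\tau_1)$ universally quantified so that the induction hypothesis is available for every pair in the predecessor state. In the base case $tx = \code{empty}$ we have $s' = init$, and $\code{validInit}$ says that the AMM component is $\code{AMMs.empty}$ and the minted-token state is empty; hence $\code{mintsupply}_{init}(\tau_0,\tau_1)$ is a sum over the empty support and equals $0$ (so it is not strictly positive), while $init\code{.amms.init}\ \tau_0\ \tau_1$ unfolds to $0 \neq 0$, which is false. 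Both sides are false, so the biconditional holds.

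For the inductive step I would case on the last transaction appended. A swap leaves the minted-token wallets untouched (\Cref{lst:unchangedmints}), so $\code{mintsupply}$ is unchanged, and it only rewrites the reserves through operations whose preconditions — in particular \code{nodrain} — keep both reserves strictly positive, so the pair stays initialized; the claim then follows from the induction hypothesis. Adding liquidity acts on an already-initialized AMM, so it can only increase both its reserves and its minted supply, and it does not touch any other pair; hence every side of the biconditional that held still holds, and the remaining pairs are handled by the induction hypothesis. Creating an AMM is the only constructor that can turn the biconditional from (false, false) to (true, true) for the pair it creates: it requires strictly positive initial reserves, so that pair becomes initialized, and it credits the creator a strictly positive amount of the freshly minted token, so its supply becomes strictly positive; for every other pair nothing changes and the induction hypothesis applies.

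The main obstacle is the remove-liquidity case, which is precisely where the biconditional behaves as a genuine two-sided invariant rather than as two independent monotonicity facts. After a withdrawal both reserves of the pair are still strictly positive — forced by the transaction's precondition that the pair retain a positive reserve — so the pair stays initialized; the delicate half is showing that burning minted tokens cannot bring the minted supply of $(\tau_0,\tau_1)$ down to exactly $0$. The argument is that, because the reserves returned on a withdrawal are proportional to the fraction of the total minted supply that is burned, exhausting the minted supply would drain the reserves entirely, contradicting the precondition; hence the burned amount is strictly below the current supply and the new supply stays strictly positive. I expect the fiddly part of the formalization to be making this proportionality argument precise — connecting the precondition "positive reserves remain" to the strict inequality "burned amount $<$ total minted supply" — together with the routine but non-trivial reindexing needed to relate the supply of the updated minted-token state to that of the old one. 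For all pairs distinct from $(\tau_0,\tau_1)$ the relevant state is unchanged and the induction hypothesis closes the case.
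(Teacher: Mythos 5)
Your proof follows essentially the same route as the paper: induction on the transaction sequence, a trivial base case from \code{validInit}, and a case analysis on the last transaction with the creation case handled by splitting on whether the created pair coincides with $(\tau_0,\tau_1)$. You additionally work out the remove-liquidity case (which the paper defers to the Lean source), and your argument there --- that the withdrawal precondition of retaining positive reserves forces the burned amount to be strictly below the total minted supply --- is exactly the right invariant-preservation step.
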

\begin{proof} By induction on the length of the sequence of transactions leading to $s'$:
\begin{itemize}
    
    \item Base case: empty transaction sequence.
    The proof is trivial for both directions, since a valid starting state has no initialized AMMs and no minted tokens in circulation.
    
    \item Inductive case: there are several subcases depending on the last transaction fired in the sequence. 
    Here we consider the creation of the AMM $(\tau_0',\tau_1')$ in reachable state $s'$.
    We proceed by cases on the truth of the equality $\left\{\tau_0, \tau_1\right\} = \left\{\tau_0', \tau_1'\right\}$. If it is a different token pair, then the supply remains unchanged along with the initialization status, and we can conclude by the induction hypothesis. If it is the same token pair, then we just incremented its minted token supply (which is non-negative, so after incrementing it, it must be strictly positive), and we just initialized the AMM.
    
    \item See source code for the other cases. \href{https://github.com/danielepusceddu/lean4-amm/blob/paper/AMMLib/Transaction/Trace.lean#L275}{AMMLib/Transaction/Trace.lean:275} 
    \qedhere
\end{itemize}
\end{proof}

\Cref{th:valueexp} allows us to determine the change in the value of a wallet after it has been updated in some way: the resulting equality is the basis for all the subsequent proofs. To illustrate it, we briefly introduce two definitions that have been omitted before: $\code{drain_0}(w_0, \tau_0)$ is the atomic token wallet such that $\code{drain_0}(w_0, \tau_0)(\tau_0) = 0$ and $\code{drain_0}(w_0, \tau_0)(\tau_1) = w_0(\tau_1)$ for every other token $\tau_1 \neq \tau_0$. We define $\code{drain_1}(w_1, (\tau_0, \tau_0)) \in \code{W_1}$ similarly.
\begin{lemma}[Value expansion]
  \label{th:valueexp}
Let $w_0 \in \code{W}_0$, $o_0 \in \code{T} \to \mathbb R_{>0}$, and $\tau_0, \tau_1 \in \code{T}$. Then, 
$$
\code{value}\left(w_0, o_0\right) \; = \; w_0(\tau_0)\cdot o(\tau_0) + \code{value}\left(\code{drain_0}(w_0, \tau_0), o_0\right)
$$
and, with $w_1 \in \code{W}_1$ and $o_1 \in \code{T^2} \to \mathbb R_{>0}$ such that $o_1(\tau_0, \tau_1) = o_1(\tau_1, \tau_0)$,
$$
\code{value}\left(w_1, o_1\right) \; = \; w_1(\tau_0, \tau_1)\cdot o_1(\tau_0, \tau_1) + \code{value}\left(\code{drain_1}(w_1, \tau_0, \tau_1), o_1\right)
$$  
\end{lemma}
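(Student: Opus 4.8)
The plan is to recognize both equalities as instances of one elementary fact about finitely supported sums: removing a point from the support of a \code{Finsupp} and adding back the term it contributed recovers the original sum; for the minted case this gets applied to the two mutually symmetric points $(\tau_0,\tau_1)$ and $(\tau_1,\tau_0)$. Concretely, unfolding \code{value} turns $\code{value}(w_0,o_0)$ into the \code{Finsupp.sum} over $w_0$'s underlying \code{bal} of the summand $g(\tau,x)=x\cdot o_0(\tau)$, which vanishes at $x=0$ because $0\cdot o_0(\tau)=0$. The first equality is then exactly \code{Finsupp.add\_sum\_erase'} applied to $g$, once we identify $\code{drain_0}(w_0,\tau_0)$ with erasing $\tau_0$ from \code{bal}. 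If a lemma of this precise shape is not directly available, I would reprove it by casing on whether $\tau_0\in\mathrm{supp}(w_0)$: in the positive case, apply \code{Finset.add\_sum\_erase} to the support and rewrite with \code{Finsupp.support\_erase} and \code{Finsupp.erase\_ne}; in the negative case, $w_0(\tau_0)=0$ kills the extracted term while $\code{drain_0}(w_0,\tau_0)=w_0$, so the equality is immediate. The only genuinely fiddly point in this half is threading the $\code W_0$ structure wrapper, \ie supplying the simp lemmas that link \code{value}, its internal \code{sum}, the field \code{bal}, and the identity $\code{drain_0}(w_0,\tau_0).\code{bal}=\code{Finsupp.erase}\ \tau_0\ (w_0.\code{bal})$.

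For the minted case, $\code{value}(w_1,o_1)$ unfolds to $S/2$, where $S$ is the \code{Finsupp.sum} over the uncurried finsupp $w_1.\code{u}$ (on $\code T^2$) of the summand $g'(p,x)=x\cdot o_1(p_1,p_2)$. I would first split on whether $\tau_0=\tau_1$: if so, then $w_1(\tau_0,\tau_1)=0$ by \code{distinct} and $\code{drain_1}(w_1,\tau_0,\tau_1)=w_1$, so the claim holds trivially. Otherwise $\tau_0\neq\tau_1$, so $(\tau_0,\tau_1)$ and $(\tau_1,\tau_0)$ are distinct points of $\code T^2$, and applying the finsupp decomposition twice — erasing $(\tau_0,\tau_1)$, then erasing $(\tau_1,\tau_0)$ from the result (since $(\tau_0,\tau_1)\neq(\tau_1,\tau_0)$, the first erasure leaves the value at $(\tau_1,\tau_0)$ equal to $w_1(\tau_1,\tau_0)$) — peels off $w_1(\tau_0,\tau_1)\cdot o_1(\tau_0,\tau_1)$ and $w_1(\tau_1,\tau_0)\cdot o_1(\tau_1,\tau_0)$, leaving the sum over the doubly-erased finsupp, which I would identify with $\code{drain_1}(w_1,\tau_0,\tau_1).\code{u}$. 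By \code{unord} we have $w_1(\tau_0,\tau_1)=w_1(\tau_1,\tau_0)$, and by hypothesis $o_1(\tau_0,\tau_1)=o_1(\tau_1,\tau_0)$, so the two extracted terms coincide and their sum equals $2\cdot w_1(\tau_0,\tau_1)\cdot o_1(\tau_0,\tau_1)$. Finally, \code{add\_div} and cancellation of $2\neq 0$ in $\mathbb R_{\geq 0}$ rewrite $(2A+B)/2$ as $A+B/2$, which is precisely $w_1(\tau_0,\tau_1)\cdot o_1(\tau_0,\tau_1)+\code{value}(\code{drain_1}(w_1,\tau_0,\tau_1),o_1)$.

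I expect the main obstacle to be the bookkeeping around \code{drain_1}: proving that zeroing both symmetric entries of $w_1$ corresponds exactly to two successive \code{Finsupp.erase} operations on the uncurried $w_1.\code{u}$, and propagating this through the $\code W_1$ structure, whose elements also carry the \code{unord} and \code{distinct} proof fields, so that even the \emph{obvious} equalities need the right congruence and extensionality lemmas. The $\tau_0=\tau_1$ corner case and the $\mathbb R_{\geq 0}$ division arithmetic are minor but must not be forgotten. Once a coherent simp set connecting \code{value}, \code{.u}, $\code{drain_0}$, $\code{drain_1}$, and \code{Finsupp.erase} is assembled, both halves close by \code{Finsupp.add\_sum\_erase'} (or the support-casing reproof) together with routine rewriting.
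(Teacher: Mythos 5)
Your proposal is correct and follows essentially the same route as the paper, which proves the lemma ``by definition of value and by properties of the sum over a finite support''; you simply spell out the relevant Finsupp-erase decomposition, the double erasure of the two symmetric entries in the minted case, and the division-by-two bookkeeping. No gaps: the $\tau_0=\tau_1$ corner case and the use of \code{unord} together with the symmetry of $o_1$ to merge the two extracted terms are exactly the points the Lean proof has to handle.
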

\begin{proof} By definition of value and by properties of the sum over a finite support. Full Lean proof at \href{https://github.com/danielepusceddu/lean4-amm/blob/paper/AMMLib/State/AtomicWall.lean#L116}{AMMLib/State/AtomicWall.lean:116}
\end{proof}

~\Cref{th:swapgain} gives an explicit formula for the gain obtained by a user upon firing a swap transaction. It is fundamental to all proofs involving gain.
\begin{lemma}[Gain of a swap]
\label{th:swapgain}
Let $sw \in \code{Swap}\left(sx,s,a,\tau_0,\tau_1,x\right)$ and 
let $o \in \code{T} \to \mathbb R_{>0}$. Then, 
$$\code{gain}\left(a, o, s, \code{apply} \; sw \right) = \left(sw\code{.y}\cdot o\left(\tau_1\right) - x \cdot o\left(\tau_0\right)\right)
  \cdot \left(1 - \frac{\left(s\code{.mints} \; a \; \tau_0 \; \tau_1\right)}{\code{mintsupply}_s\left(\tau_0,\tau_1\right)}\right)
$$
\end{lemma}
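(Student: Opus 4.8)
The plan is to decompose $a$'s networth into its atomic part $\code{value}(w_0, o)$ and its minted part $\code{value}(w_1, o')$, where $w_0 = s\code{.atoms.get}\; a$, $w_1 = s\code{.mints.get}\; a$ and $o' = s\code{.mintedprice}\; o$; since $\code{gain}$ is the difference of the real casts of the two networths (and casting commutes with the sum), it equals the change in the atomic value plus the change in the minted value, and I would compute these two changes separately with \Cref{th:valueexp} and add them. From here on all quantities are viewed in $\mathbb R$. One preliminary fact I would record: $\tau_0 \neq \tau_1$, since otherwise $s\code{.amms.res}\; \tau_0\; \tau_1 = 0$ by the $\code{distinct}$ field of $\code{AMMs}$, contradicting $sw\code{.exi}$; also $s\code{.amms}$ remains initialized at $(\tau_0,\tau_1)$ after the swap.

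For the atomic change: by the definition of $\code{apply}$ (\cref{lst:updatoms}), the wallet $(\code{apply}\; sw)\code{.atoms.get}\; a$ is $w_0$ with its $\tau_0$-entry decreased by $x$ and its $\tau_1$-entry increased by $sw\code{.y}$. I would apply \Cref{th:valueexp} at $\tau_0$ and then at $\tau_1$ to this wallet and to $w_0$ (using $\tau_0 \neq \tau_1$ so that draining $\tau_0$ does not touch the $\tau_1$-entry), observe that the resulting doubly-drained wallet is the same in both cases — the two wallets agree away from $\tau_0$ and $\tau_1$ — so those terms cancel, and read off that the atomic change is $sw\code{.y}\cdot o(\tau_1) - x\cdot o(\tau_0)$. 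The only subtlety is that the $\tau_0$-entry subtraction is a genuine real subtraction precisely because $x \leq w_0(\tau_0)$, which is $sw\code{.enough}$.

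For the minted change: $\code{apply}$ leaves every minted wallet untouched (\cref{lst:unchangedmints}), so $(\code{apply}\; sw)\code{.mints.get}\; a = w_1$ and the change is $\code{value}(w_1, \hat o) - \code{value}(w_1, o')$, the \emph{same} wallet against two price functions, where $\hat o = (\code{apply}\; sw)\code{.mintedprice}\; o$. Both $o'$ and $\hat o$ are symmetric in their two token arguments (as $\code{mintedprice}$ is), and they agree on every pair except $(\tau_0,\tau_1)$ and $(\tau_1,\tau_0)$, because $\code{apply}$ alters only the reserves of the pair $(\tau_0,\tau_1)$ and does not change $\code{mintsupply}$. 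Hence, applying \Cref{th:valueexp} at $(\tau_0,\tau_1)$ to $\code{value}(w_1, \hat o)$ and to $\code{value}(w_1, o')$, the two drained terms cancel (a congruence lemma for $\code{value}$, using that $\hat o$ and $o'$ coincide on the support of the drained wallet), leaving $w_1(\tau_0,\tau_1)\cdot\bigl(\hat o(\tau_0,\tau_1) - o'(\tau_0,\tau_1)\bigr)$. Writing $r_0, r_1$ for the two reserves of $(\tau_0,\tau_1)$ in $s$ and $m = \code{mintsupply}_s(\tau_0,\tau_1)$, unfolding $\code{mintedprice}$ gives numerator $r_0\,o(\tau_0) + r_1\,o(\tau_1)$ before and $(r_0+x)\,o(\tau_0) + (r_1 - sw\code{.y})\,o(\tau_1)$ after (the reserves become $r_0+x$ and $r_1 - sw\code{.y}$ by the API lemmas for $\code{apply}$, the last subtraction being genuine since $sw\code{.y} < r_1$ by $sw\code{.nodrain}$), both over the common denominator $m$; so $\hat o(\tau_0,\tau_1) - o'(\tau_0,\tau_1) = \bigl(x\,o(\tau_0) - sw\code{.y}\,o(\tau_1)\bigr)/m$, which also holds when $m = 0$ under Mathlib's division-by-zero convention, so $s$ need not be assumed reachable. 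Since $w_1(\tau_0,\tau_1) = s\code{.mints}\; a\; \tau_0\; \tau_1$, the minted change is $-\bigl(s\code{.mints}\; a\; \tau_0\; \tau_1 / m\bigr)\cdot\bigl(sw\code{.y}\cdot o(\tau_1) - x\cdot o(\tau_0)\bigr)$. Adding the atomic and minted changes and factoring out $sw\code{.y}\cdot o(\tau_1) - x\cdot o(\tau_0)$ gives the claimed formula.

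I expect the main obstacle to be the coercion bookkeeping between $\mathbb R_{\geq 0}$ and $\mathbb R$: making the truncated $\mathbb R_{\geq 0}$-subtractions honest (the updated $\tau_0$-balance needs $sw\code{.enough}$, the updated reserve $r_1 - sw\code{.y}$ needs $sw\code{.nodrain}$), and pushing the casts cleanly through the finite sums inside $\code{value}$ and the divisions inside $\code{mintedprice}$. The two "the drained remainder is unchanged" steps are the other fiddly points — for $w_0$ this is a pointwise $\code{Finsupp}$ computation, for $w_1$ it needs the off-$(\tau_0,\tau_1)$ agreement of $o'$ and $\hat o$ plugged into a congruence lemma for $\code{value}$ — after which the remaining steps are routine arithmetic in $\mathbb R$.
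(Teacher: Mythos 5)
Your proposal is correct and follows the same route as the paper's proof: repeated application of \Cref{th:valueexp} to isolate the contributions of $\tau_0$, $\tau_1$ (atomic part) and of the pair $(\tau_0,\tau_1)$ (minted part, where only the price changes), followed by algebraic simplification. Your write-up is simply a more detailed elaboration of that same strategy, including the coercion and division-by-zero caveats the paper leaves implicit.
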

\begin{proof} By repeated application of \Cref{th:valueexp} in order to isolate the value contributed by the token types involved in the swap, and by use of Mathlib's \code{ring\_nf} simplifier tactic. \href{https://github.com/danielepusceddu/lean4-amm/blob/paper/AMMLib/Transaction/Swap/Networth.lean#L55}{AMMLib/Transaction/Swap/Networth.lean:55}
\end{proof}

\Cref{th:swapvsexch} establishes a correspondence between the profitability of a swap transaction (\ie, a positive or negative gain) and the order between the swap rate and the exchange rate given by the price oracle.  
In particular, assuming a trader $a$ who is not a liquidity provider (\ie, $a$ has no minted tokens for the AMM pair targeted by the swap), \Cref{th:swapvsexch} states that:
\begin{itemize}
\item $\code{gain}\left(a, o, s, \code{apply}\;sw\right) < 0 \;\iff\; sx\left(x, r_0, r_1\right) < \nicefrac{o\left(\tau_0\right)}{o\left(\tau_1\right)}$
\item $\code{gain}\left(a, o, s, \code{apply}\;sw\right) = 0 \;\iff\; sx\left(x, r_0, r_1\right) = \nicefrac{o\left(\tau_0\right)}{o\left(\tau_1\right)}$
\item $\code{gain}\left(a, o, s, \code{apply}\;sw\right) > 0 \;\iff\; sx\left(x, r_0, r_1\right) > \nicefrac{o\left(\tau_0\right)}{o\left(\tau_1\right)}$
\end{itemize}
Technically, to formalize this result it is convenient to  use Mathlib's $\code{cmp}$\footnote{\url{https://leanprover-community.github.io/mathlib4_docs/Mathlib/Init/Data/Ordering/Basic.html\#cmp}}, which gives the order between the two parameters.

\begin{lemma}[Swap rate \emph{vs.}~exchange rate]
\label{th:swapvsexch}
Let $sw \in \code{Swap}\left(sx,s,a,\tau_0,\tau_1,x\right)$ be a swap transaction, and let $o \in \code{T} \to \mathbb R_{>0}$ be a price oracle. 
For $i \in \{0,1\}$, let $r_i = s.\code{amms}.r_i\left(\tau_0,\tau_1\right)$. 
If $s.\code{mints}\left(a\right)\left(\tau_0,\tau_1\right) = 0$, then 
$$
\code{cmp}\left(\code{gain}\left(a, o, s, \code{apply}\;sw\right), 0\right) \; = \; \code{cmp}\left(sx\left(x, r_0, r_1\right), \frac{o\left(\tau_0\right)}{o\left(\tau_1\right)}\right)
$$
\end{lemma}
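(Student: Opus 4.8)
The plan is to derive the result directly from \Cref{th:swapgain} and then reduce it to two elementary order facts about Mathlib's \code{cmp}. By \Cref{th:swapgain}, the gain of the swap is
$$
\left(sw\code{.y}\cdot o(\tau_1) - x\cdot o(\tau_0)\right)\cdot\left(1 - \frac{s\code{.mints}\,a\,\tau_0\,\tau_1}{\code{mintsupply}_s(\tau_0,\tau_1)}\right).
$$
Under the hypothesis $s\code{.mints}(a)(\tau_0,\tau_1) = 0$ the numerator of the fraction vanishes, so the fraction is $0$ and the second factor collapses to $1$; note that this step does not require $\code{mintsupply}_s(\tau_0,\tau_1)\neq 0$, since in Lean division by zero yields zero, so no appeal to reachability is needed. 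Hence $\code{gain}\left(a, o, s, \code{apply}\;sw\right)$ equals $sw\code{.y}\cdot o(\tau_1) - x\cdot o(\tau_0)$, viewed as a real number.

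Next I would unfold $sw\code{.y} = x\cdot sx(x,r_0,r_1)$ (its definition, with $r_i = s\code{.amms}.r_i\,\tau_0\,\tau_1\,sw\code{.exi}$) and rewrite the gain as
$$
x\cdot o(\tau_1)\cdot\left(sx(x,r_0,r_1) - \frac{o(\tau_0)}{o(\tau_1)}\right),
$$
which is a routine \code{field\_simp}/\code{ring} manipulation given $o(\tau_1) > 0$. Since $x>0$ and $o(\tau_1)>0$, the factor $x\cdot o(\tau_1)$ is strictly positive, and multiplication by a positive constant preserves \code{cmp} against $0$; therefore $\code{cmp}\left(\code{gain}, 0\right) = \code{cmp}\bigl(sx(x,r_0,r_1) - \nicefrac{o(\tau_0)}{o(\tau_1)},\, 0\bigr)$. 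Finally, since subtraction is strictly monotone (equivalently, by the standard lemma relating $\code{cmp}(u-v,0)$ to $\code{cmp}(u,v)$), this last term equals $\code{cmp}\bigl(sx(x,r_0,r_1),\, \nicefrac{o(\tau_0)}{o(\tau_1)}\bigr)$, which is the claim. Chaining the three \code{cmp} equalities concludes the proof.

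The mathematical content here is just a one-line algebraic identity together with two monotonicity facts, so the main obstacle I expect is the bookkeeping of coercions rather than anything deep. The gain lives in $\mathbb R$ whereas $sw\code{.y}$, $x$, $o(\tau_i)$ and the reserves live in $\mathbb R_{>0}$ or $\mathbb R_{\geq 0}$; the factoring step and both \code{cmp} rewrites must therefore be performed after pushing the coercions inward, and one must check that the real-valued subtraction $sx(x,r_0,r_1) - \nicefrac{o(\tau_0)}{o(\tau_1)}$ is the coercion-compatible one and that the coercion of $x\cdot o(\tau_1)$ is positive in $\mathbb R$. I would manage this with \code{push\_cast} and the positivity lemmas already used to define \code{AMMs.r\_0} and \code{AMMs.r\_1}, together with \code{cmp}-congruence lemmas to transport the comparison along each algebraic rewrite.
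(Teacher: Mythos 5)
Your proof is correct and follows the same route as the paper, which proves this lemma precisely by applying \Cref{th:swapgain} and then performing the algebraic manipulation you describe (collapsing the liquidity-provider factor to $1$ under the no-minted-tokens hypothesis, factoring out the positive quantity $x\cdot o(\tau_1)$, and transporting $\code{cmp}$ along these rewrites). The coercion bookkeeping you anticipate is indeed the only real friction in the mechanized version.
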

\begin{proof}
By algebraic manipulation and application of ~\Cref{th:swapgain}. \href{https://github.com/danielepusceddu/lean4-amm/blob/paper/AMMLib/Transaction/Swap/Networth.lean#L86}{AMMLib/Transaction/Swap/Networth.lean:86}
\end{proof}

\Cref{th:swapdirection} establishes that, in a constant-product AMM, there exists only one profitable direction for a swap. Namely, if swapping $\tau_0$ for $\tau_1$ gives a positive gain, then swapping in the other direction (\ie, $\tau_1$ for $\tau_0$) will give a negative gain.
Note that the inverse does not hold: a negative gain in a direction does not imply a positive gain in the other direction. 
\begin{lemma}[Unique direction for swap gain]
\label{th:swapdirection}
Let $sw \in \code{Swap}\left(\code{constprod},s,a,\tau_0,\tau_1,x\right)$ and $sw' \in \code{Swap}\left(\code{constprod},s,a,\tau_1,\tau_0,x'\right)$ be two swap transactions in opposite directions, and let $o \in \code{T} \to \mathbb R_{>0}$.
If $\code{gain}\left(a, o, s, \code{apply}\; sw\right) > 0$, then $\code{gain}\left(a, o, s, \code{apply}\; sw\right) < 0$.
\end{lemma}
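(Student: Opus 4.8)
The plan is to reduce both gains to the explicit formula of \Cref{th:swapgain} and then compare the signs of the two ``atomic'' factors, exploiting the fact that the ``minted-share'' factor is common to both swap directions. (The stated conclusion should of course read $\code{gain}(a,o,s,\code{apply}\;sw') < 0$.)

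First I would apply \Cref{th:swapgain} to $sw$ and to $sw'$, obtaining
\[
\code{gain}(a,o,s,\code{apply}\;sw) = \bigl(sw\code{.y}\cdot o(\tau_1) - x\cdot o(\tau_0)\bigr)\cdot k,\qquad
\code{gain}(a,o,s,\code{apply}\;sw') = \bigl(sw'\code{.y}\cdot o(\tau_0) - x'\cdot o(\tau_1)\bigr)\cdot k',
\]
where $k = 1 - \frac{s\code{.mints}\,a\,\tau_0\,\tau_1}{\code{mintsupply}_s(\tau_0,\tau_1)}$ and $k'$ is the analogous expression with $\tau_0,\tau_1$ swapped. Using the $\code{unord}$ property of $\code W_1$ and the symmetry of $\code{mintsupply}$ in its token arguments (an immediate auxiliary fact worth isolating), we get $k = k'$; moreover $0 \le k \le 1$ because a user cannot hold more minted tokens than the total supply. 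Since $\code{gain}(a,o,s,\code{apply}\;sw) > 0$ is a product of a real number with $k \in [0,1]$, both factors must be strictly positive, so I extract $k > 0$ and $sw\code{.y}\cdot o(\tau_1) - x\cdot o(\tau_0) > 0$.

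Next I would unfold $\code{constprod}$. Writing $r_0 = s.\code{amms}.r_0(\tau_0,\tau_1)$ and $r_1 = s.\code{amms}.r_1(\tau_0,\tau_1)$, we have $sw\code{.y} = x\cdot r_1/(r_0+x)$, and --- since for the reversed pair $(\tau_1,\tau_0)$ the two reserves exchange their roles, i.e. $s.\code{amms}.r_0(\tau_1,\tau_0) = r_1$ and $s.\code{amms}.r_1(\tau_1,\tau_0) = r_0$ --- we get $sw'\code{.y} = x'\cdot r_0/(r_1+x')$. Clearing the positive denominator and dividing by $x>0$, the extracted inequality becomes $r_1\cdot o(\tau_1) > (r_0+x)\cdot o(\tau_0)$, which in particular gives $r_0\cdot o(\tau_0) < r_1\cdot o(\tau_1)$ since $x>0$ and $o(\tau_0)>0$. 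Using $x'>0$ and positivity of $r_1+x'$, this chains to $r_0\cdot o(\tau_0) < r_1\cdot o(\tau_1) < (r_1+x')\cdot o(\tau_1)$, which rearranges to $x'\cdot r_0/(r_1+x')\cdot o(\tau_0) < x'\cdot o(\tau_1)$, i.e. $sw'\code{.y}\cdot o(\tau_0) - x'\cdot o(\tau_1) < 0$. Multiplying by $k > 0$ yields $\code{gain}(a,o,s,\code{apply}\;sw') < 0$.

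I expect the main obstacle to be bookkeeping rather than mathematics: carefully tracking how $s.\code{amms}.r_0$ and $s.\code{amms}.r_1$ swap roles between the pair $(\tau_0,\tau_1)$ and the pair $(\tau_1,\tau_0)$ (and discharging the attendant initialization preconditions $sw\code{.exi}$, $sw'\code{.exi}$), and pushing the chain of strict inequalities through the $\mathbb R_{\ge 0}$/$\mathbb R_{>0}$ coercions and the real-valued subtraction hidden in the definition of $\code{gain}$. Once the reserves are correctly identified, the inequality chain itself is short and amenable to Mathlib's $\code{positivity}$/$\code{field\_simp}$/$\code{nlinarith}$-style automation.
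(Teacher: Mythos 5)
Your proof is correct, and the underlying algebra is the same as the paper's: from a positive gain on the $\tau_0\!\to\!\tau_1$ swap you extract $r_1\cdot o(\tau_1) > (r_0+x)\cdot o(\tau_0) > r_0\cdot o(\tau_0)$ and push this through the reversed constant-product rate $r_0/(r_1+x')$. The difference is in packaging. The paper proves this in one step from \Cref{th:swapvsexch}, applied once in each direction: gain of $sw$ positive gives $\code{constprod}(x,r_0,r_1) > o(\tau_0)/o(\tau_1)$, the inequality chain gives $\code{constprod}(x',r_1,r_0) < o(\tau_1)/o(\tau_0)$, and a second application gives the negative gain for $sw'$. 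You instead inline that lemma, going back to the raw gain formula of \Cref{th:swapgain}. What your route buys is worth noting: \Cref{th:swapvsexch} carries the hypothesis $s.\code{mints}(a)(\tau_0,\tau_1)=0$, which the statement of \Cref{th:swapdirection} does not include, so the paper's one-line citation silently relies on a hypothesis that is either missing from the statement or present only in the Lean source. Your argument dispenses with it: you observe that the common factor $k = 1 - (s.\code{mints}\,a\,\tau_0\,\tau_1)/\code{mintsupply}_s(\tau_0,\tau_1)$ lies in $[0,1]$ (a user's balance is one summand of the supply), so $\code{gain}(sw)>0$ already forces $k>0$ together with positivity of the atomic factor, and the same $k$ (by \code{unord} and symmetry of \code{mintsupply}) then transfers the negative sign to $sw'$. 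You also correctly flag the typo in the conclusion ($sw$ should be $sw'$) and the bookkeeping point that $\code{r}_0$ and $\code{r}_1$ exchange roles when the token pair is reversed, which is exactly where the Lean proof effort concentrates.
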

\begin{proof}
By~\Cref{th:swapvsexch}. \href{https://github.com/danielepusceddu/lean4-amm/blob/paper/AMMLib/Transaction/Swap/Constprod.lean#L160}{AMMLib/Transaction/Swap/Constprod.lean:160}
\end{proof}

\begin{example}
\label{ex:results:swapdirection}
Consider a blockchain state $s$ and atomic tokens $\tau_0$ and $\tau_1$, and assume that: \begin{inlinelist}
\item $a$ is a trader with no minted tokens, \ie $(s\code{.mints}\;a)\;\tau_0\;\tau_1 = 0$;
\item the AMM pair for $(\tau_0, \tau_1)$ has been initialized in $s$ and has reserves $r_0 = (s\code{.amms}\;\tau_0\;\tau_1) = 18$ and $r_1 = (s\code{.amms}\;\tau_1\;\tau_0) = 6$;
\item all the AMMs in $s$ use the constant-product swap rate function;
\item $o$ is a price oracle such that $o\;\tau_0 = 3$ and $o\;\tau_1 = 4$.
\end{inlinelist}
Assume that $a$ wants to sell 6 units of $\tau_1$ for some units of $\tau_0$ with the swap transaction $sw \in \code{Swap}(\code{constprod}, s, a, \tau_1, \tau_0, 6)$. Then, by~\Cref{th:swapgain}, $a$'s gain is given by $9\cdot 3 - 24 = 3 > 0$.
Coherently with~\Cref{th:swapdirection}, any swap in the opposite direction would give a negative gain: \eg, if $a$ sells 6 units of $\tau_0$, her gain would be $3/2 \cdot 4 - 18 = -12$. 
\end{example}

We say that a swap transaction $sw \in \code{Swap}\left(\code{constprod},s,a,\tau_0,\tau_1,x\right)$ is \emph{optimal} for a given price oracle $o$ when, for all $sw' \in \code{Swap}\left(\code{constprod},s,a,\tau_0,\tau_1,x'\right)$ with $x' \neq x$ we have 
\[
\code{gain}\left(a, o, s, \code{apply}\; sw'\right) < \code{gain}\left(a, o, s, \code{apply}\; sw\right)
\]

\Cref{th:suffopt} gives a sufficient condition for the optimality of swaps in constant-product AMMs: it suffices that the ratio of the AMM's reserves after the swap is equal to the exchange rate given by the price oracle.
Intuitively, the condition in~\Cref{th:suffopt} means that the exchange rate between the two token types induced by the AMM (\ie, the ration between the token reserves) is aligned with the exchange rate given by the price oracle, and so further swaps would yield a negative gain.
Note that, by definition, if a swap is optimal then it is also unique, \ie swapping any other amount would yield a suboptimal gain.

\begin{theorem}[Sufficient condition for optimal swaps]
\label{th:suffopt}
Let $sw \in \code{Swap}\left(\code{constprod},s,a,\tau_0,\tau_1,x\right)$
be a swap transaction on a constant-product AMM, 
and let $o \in \code{T} \to \mathbb R_{>0}$ be a price oracle. 
For $i \in \{0,1\}$,
let $r_i' = \left(\code{apply}\; sw\right)\code{.amms.r}_i\left(\tau_0,\tau_1\right)$
be the AMM reserves after the swap.
If $\nicefrac{r_1'}{r_0'} = \nicefrac{o(\tau_0)}{o(\tau_1)}$, then $sw$ is optimal.
\end{theorem}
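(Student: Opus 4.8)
The plan is to reduce the whole statement, via \Cref{th:swapgain}, to one algebraic identity comparing the gain of $sw$ with the gain of an arbitrary competitor $sw' \in \code{Swap}(\code{constprod}, s, a, \tau_0, \tau_1, x')$ with $x' \neq x$, and then to read off the sign of the difference.

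First I would unfold both gains with \Cref{th:swapgain}. Abbreviate $r_0 = s.\code{amms}.r_0(\tau_0,\tau_1)$, $r_1 = s.\code{amms}.r_1(\tau_0,\tau_1)$, and $c = 1 - \nicefrac{(s.\code{mints}\;a\;\tau_0\;\tau_1)}{\code{mintsupply}_s(\tau_0,\tau_1)}$. For the constant-product rate, $sw\code{.y} = \nicefrac{x\,r_1}{r_0+x}$ and $sw'\code{.y} = \nicefrac{x'\,r_1}{r_0+x'}$, so \Cref{th:swapgain} gives
$$
\code{gain}(a,o,s,\code{apply}\;sw) = c\left(\frac{x\,r_1}{r_0+x}\,o(\tau_1) - x\,o(\tau_0)\right),
$$
and the analogous equation for $sw'$ with $x'$ in place of $x$. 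The constant $c$ is identical in both (it only involves $s,a,\tau_0,\tau_1$), so the difference of the gains is $c$ times the difference of the two bracketed quantities.

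Second I would rewrite the hypothesis $\nicefrac{r_1'}{r_0'} = \nicefrac{o(\tau_0)}{o(\tau_1)}$ in terms of the pre-swap data. A constant-product swap of $x$ units of $\tau_0$ leaves reserves $r_0' = r_0 + x$ and $r_1' = r_1 - sw\code{.y} = \nicefrac{r_0\,r_1}{r_0+x}$ (a one-line computation, likely already packaged as a lemma about $\code{Swap.apply}$ under $\code{constprod}$), so the hypothesis becomes $\nicefrac{r_0\,r_1}{(r_0+x)^2} = \nicefrac{o(\tau_0)}{o(\tau_1)}$, \ie the polynomial identity $r_0\,r_1\,o(\tau_1) = o(\tau_0)\,(r_0+x)^2$. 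Substituting this into the difference of the bracketed terms — after telescoping $\nicefrac{x}{r_0+x} - \nicefrac{x'}{r_0+x'} = \nicefrac{r_0(x-x')}{(r_0+x)(r_0+x')}$ and cancelling a factor $r_0+x$ — everything collapses to
$$
\code{gain}(a,o,s,\code{apply}\;sw) - \code{gain}(a,o,s,\code{apply}\;sw') = c\cdot\frac{o(\tau_0)\,(x-x')^2}{r_0+x'}.
$$
In Lean this last identity should succumb to \code{field\_simp} and \code{ring} once the hypothesis is in polynomial form and the denominators $r_0+x$, $r_0+x'$, $\code{mintsupply}_s(\tau_0,\tau_1)$ are known to be nonzero.

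Finally I would conclude by sign analysis. Since $o(\tau_0) > 0$, $r_0+x' > 0$, and $(x-x')^2 > 0$ because $x' \neq x$, the right-hand side has the sign of $c$; hence, provided $c > 0$, it is strictly positive, which is exactly the definition of $sw$ being optimal. The one delicate point — and where I expect most of the work to lie — is establishing $c > 0$: one always has $s.\code{mints}\;a\;\tau_0\;\tau_1 \le \code{mintsupply}_s(\tau_0,\tau_1)$, so $c \ge 0$, but strictness requires that $a$ not hold the entire minted-token supply of the pool. This must come from an additional hypothesis of the formalized theorem (\eg $s.\code{mints}\;a\;\tau_0\;\tau_1 = 0$, as already used in \Cref{th:swapvsexch}, giving $c = 1$), since without it all gains collapse to $0$ and no swap is strictly better than another. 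Everything else is routine arithmetic over $\mathbb R_{>0}$.
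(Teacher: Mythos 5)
Your proof is correct, but it takes a genuinely different route from the paper's. The paper proves \Cref{th:suffopt} by case analysis on $x < x'$ versus $x' < x$ and then invokes the sign-comparison lemma \Cref{th:swapvsexch} (comparing the swap rate of the competitor, or of the residual swap beyond $x$, against the oracle rate $o(\tau_0)/o(\tau_1)$); this is modular but requires relating the two swaps' gains through an intermediate comparison with zero. You instead go straight to \Cref{th:swapgain}, compute the difference of the two gains in closed form, and --- after substituting the hypothesis in the polynomial form $r_0 r_1\, o(\tau_1) = o(\tau_0)(r_0+x)^2$ --- exhibit it as $c \cdot o(\tau_0)(x-x')^2/(r_0+x')$, a manifestly positive multiple of a square. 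Your algebra checks out ($\nicefrac{x}{r_0+x} - \nicefrac{x'}{r_0+x'} = \nicefrac{r_0(x-x')}{(r_0+x)(r_0+x')}$ is the right telescoping), and the perfect-square form eliminates the paper's case split entirely while also yielding a quantitative lower bound on the optimality gap, essentially making the strict concavity of the gain in $x$ explicit. You are also right to flag the hypothesis needed for $c > 0$: the theorem as stated in the paper omits the condition $s.\code{mints}\;a\;\tau_0\;\tau_1 = 0$ (or at least that $a$ does not hold the entire minted supply), yet the paper's own proof depends on it implicitly, since \Cref{th:swapvsexch} carries exactly that hypothesis; without it every gain is scaled by $c=0$ and no swap is \emph{strictly} better than another, so strict optimality fails. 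Identifying that as the one non-routine obligation is accurate.
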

\begin{proof}
By cases on $x < x'$ and by application of ~\Cref{th:swapvsexch}. {\href{https://github.com/danielepusceddu/lean4-amm/blob/paper/AMMLib/Transaction/Swap/Constprod.lean#L184}{AMMLib/Transaction/Swap/Constprod.lean:184}}
\end{proof}

\begin{example} 
\label{ex:results:suffopt}
Under the assumptions of~\Cref{ex:results:suffopt}, the exchange rate between $\tau_1$ and $\tau_0$ given by the price oracle is
$(o\;\tau_0)/(o\;\tau_1) = 3/4$, while the exchange rate induced by the AMM (\ie, the ratio between the reserves) is $r_1/r_0 = 1/3$.
Hence, to satisfy the equality in~\Cref{th:suffopt} we must perform a swap that increases $r_1$ and decreases $r_0$, \ie $a$ must sell units of $\tau_1$ to buy units of $\tau_0$, coherently with the necessary condition for a positive gain in~\Cref{ex:results:suffopt}. 
\Cref{th:arbitsol} below will establish exactly how many units of $\tau_1$ must be traded.
\end{example}

\Cref{th:arbitsol} gives an explicit formula for the input amount $x$ that yields an optimal swap transaction for a given AMM pair, under the assumption that the user firing the transaction does not hold the AMM's minted token type.
The other implicit assumption is that the user $a$ firing the swap has the needed amount $x$ of units of the sold token, \ie $x \leq (s\code{.atoms}\;a)\;\tau_1$. 
In practice, this assumption can always be satisfied with \emph{flash loans}, which allow $a$ to borrow the amount $x$, perform the swap, and then return the loan in a single, atomic transaction.

\begin{theorem}[Arbitrage for constant-product AMMs]
\label{th:arbitsol}
Let $sw \in \code{Swap}\left(\code{constprod},s,a,\tau_0,\tau_1,x\right)$
be a swap transaction on a constant-product AMM,
and let $o \in \code{T} \to \mathbb R_{>0}$ be a price oracle.
For $i \in \{0,1\}$, let $r_i = s\code{.amms.r}_i\left(\tau_0,\tau_1\right)$ be the AMM reserves. 
If $a$ has no minted tokens (\ie, $s\code{.mints}\left(a\right)\left(\tau_0,\tau_1\right) = 0)$ then $sw$ is optimal if the amount of traded units of $\tau_0$ is:
\[
x \; = \; \sqrt{\frac{o(\tau_1)\cdot r_0 \cdot r_1}{o(\tau_0)}} - r_0
\]
\end{theorem}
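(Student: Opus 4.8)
The plan is to reduce to \Cref{th:suffopt}, which turns optimality of $sw$ into the single equality $\nicefrac{r_1'}{r_0'} = \nicefrac{o(\tau_0)}{o(\tau_1)}$ between the post-swap reserve ratio and the oracle exchange rate. So the whole argument is the verification of that equality for the specific value of $x$ in the statement, together with the bookkeeping that makes the equality type-check over $\mathbb{R}_{\geq 0}$ and $\mathbb{R}_{>0}$.

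First I would compute the post-swap reserves. Since the AMM uses $\code{constprod}$, the output is $sw\code{.y} = x\cdot\nicefrac{r_1}{(r_0+x)}$, and the definition of $\code{Swap.apply}$ (add $x$ to the $\tau_0$ reserve, subtract $sw\code{.y}$ from the $\tau_1$ reserve, via $\code{add\_r}_0$ and $\code{sub\_r}_1$) gives
\[
r_0' \; = \; r_0 + x, \qquad r_1' \; = \; r_1 - \frac{x\,r_1}{r_0+x} \; = \; \frac{r_0\,r_1}{r_0+x},
\qquad\text{hence}\qquad
\frac{r_1'}{r_0'} \; = \; \frac{r_0\,r_1}{(r_0+x)^2}.
\]
In Lean this is mostly unfolding $\code{Swap.apply}$, $\code{Swap.y}$, $\code{SX.constprod}$ and the reserve-update lemmas, followed by a \code{field\_simp}/\code{ring} normalization, carrying along the positivity facts $r_0, r_1, r_0 + x > 0$ needed to divide.

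Next I would substitute the hypothesis $x = \sqrt{\nicefrac{o(\tau_1)\,r_0\,r_1}{o(\tau_0)}} - r_0$. Because $x \in \mathbb{R}_{>0}$, we have $x > 0$, which forces $\sqrt{\nicefrac{o(\tau_1)\,r_0\,r_1}{o(\tau_0)}} > r_0$, so the (truncated) subtraction coincides with the genuine one and $r_0 + x = \sqrt{\nicefrac{o(\tau_1)\,r_0\,r_1}{o(\tau_0)}}$. Squaring — using nonnegativity of the radicand so that $\code{NNReal.sq\_sqrt}$ applies — gives $(r_0+x)^2 = \nicefrac{o(\tau_1)\,r_0\,r_1}{o(\tau_0)}$, and plugging this into the formula above collapses $\nicefrac{r_1'}{r_0'}$ to $\nicefrac{r_0 r_1}{\,o(\tau_1) r_0 r_1 / o(\tau_0)} = \nicefrac{o(\tau_0)}{o(\tau_1)}$ (cancelling $r_0 r_1 \neq 0$), which is precisely the premise required by \Cref{th:suffopt}; applying that theorem concludes.

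I expect the main obstacle to be this $\mathbb{R}_{\geq 0}$/$\mathbb{R}_{>0}$ arithmetic rather than the mathematics: truncated subtraction means the rewrite $r_0 + x = \sqrt{\cdots}$ is not formal and must be mediated by $x > 0$; the square-root cancellation needs the radicand's nonnegativity; and each division (in $r_1'$, in the reserve ratio, and in the final simplification) drags a nonzero-denominator side goal that must be discharged from $r_0, r_1, o(\tau_0), o(\tau_1) > 0$. Once these are in place, \code{field\_simp} with \code{ring}/\code{ring\_nf} should close the algebraic identity and \Cref{th:suffopt} finishes the proof.
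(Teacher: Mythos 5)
Your proposal is correct and follows exactly the paper's route: the paper's proof is precisely ``by algebraic manipulation and \Cref{th:suffopt},'' and your computation of the post-swap ratio $r_1'/r_0' = r_0 r_1/(r_0+x)^2$ followed by the square-root substitution is the algebraic manipulation in question. The additional attention you pay to the $\mathbb{R}_{\geq 0}$ truncated subtraction and positivity side conditions is a faithful account of where the Lean effort actually goes.
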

\begin{proof}
By algebraic manipulation and~\Cref{th:suffopt}.
\href{https://github.com/danielepusceddu/lean4-amm/blob/paper/AMMLib/Transaction/Swap/Constprod.lean#L316}{AMMLib/Transaction/Swap/Constprod.lean:316}
\end{proof}

\begin{example} 
Under the assumptions of~\Cref{ex:results:suffopt}, we know that to perform a profitable swap the trader must sell units of $\tau_1$, \ie fire a transaction
$\code{Swap}\left(\code{constprod},s,a,\tau_1,\tau_0,x\right)$.
\Cref{th:arbitsol} gives the optimal input value $x$, \ie, the number of sold units of $\tau_1$: 
\[
x \; = \; \sqrt{\frac{3 \cdot 6 \cdot 18}{4}} \, - \, 6 \; = \; 3
\] 
Then, the output amount is given by $sw\code{.y} = 3 \cdot 18/(6+3) = 6$ and, by~\Cref{th:swapgain}, the gain of $a$ is $6 \cdot 3 - 3 \cdot 4 = 6$, which maximizes it. 
Note that, in the new state, the exchange rate given by the AMM coincides with that given by the price oracle: $(r_1 + x)(r_0 - sw\code{.y)} = (o\;\tau_0)/(o\;\tau_1)$.
\end{example}

\section{Related work}
\label{sec:related}

The closest work to ours is~\cite{Nielsen23cpp},
which proposes a methodology for developing and verifying AMMs in the Coq proof assistant. 
In this approach, an AMM is decomposed into multiple interacting smart contract:
\eg, each minted token is modelled as a single smart contract,
following way fungible tokens are encoded 
in blockchains platforms that do not provide custom tokens natively,
as \eg in Ethereum and Tezos.
These smart contracts are then implemented as Coq functions on top of 
ConCert~\cite{Annenkov20cpp}, a generic model of blockchain platforms and smart contracts mechanized in Coq.
Concert is used to verify behavioural properties of smart contracts,
either in isolation or composed with other smart contracts:
this is fundamental for~\cite{Nielsen23cpp}, 
where AMMs are specified as compositions of multiple contracts. 

More specifically, \cite{Nielsen23cpp} applies the proposed methodology to the Dexter2 protocol, which implements a constant-product AMM based on Uniswap v1
on the Tezos blockchain%
\footnote{\url{https://gitlab.com/dexter2tz/dexter2tz/-/tree/master/}}.
The main properties of AMMs proved in~\cite{Nielsen23cpp} are correspondences
between the state of AMMs and the sequences of transactions executed on the blockchain.
In particular, they prove that:
\begin{itemize}
\item the balance recorded in the main AMM contract is coherent with the actual balance resulting from the execution of the sequence of transactions;
\item the supply of the minted token recorded in the main AMM contract is equal to the actual supply resulting from the execution;
\item the state of the minted token contract is coherent with the execution.
\end{itemize}
Furthermore, starting from the Coq specification of the Dexter2 AMM, \cite{Nielsen23cpp} 
extracts verified CamlLigo code, which is directly deployable on the Tezos blockchain. 

Although both our work and \cite{Nielsen23cpp} involve AMM formalizations within a proof assistant, the ultimate goals are quite different. 
The formalization in~\cite{Nielsen23cpp} closely follows the concrete implementation of a particular AMM instance (Dexter2) and produces a deployable implementation 
that is provably coherent with the proposed Coq specification. 
By contrast, we start from a more abstract specification of AMM,
with the goal of studying the properties that must be satisfied by any implementation coherent with the specification.
An advantage of our approach over~\cite{Nielsen23cpp} is that it provides a suitable level of abstraction where proving properties about the economic mechanisms of AMMs,
\ie properties about the gain of users and of equilibria among users' strategies.
In particular, \Cref{th:arbitsol} establishes a paradigmatic property of AMMs, which explains the economic mechanism underlying their design.

Besides these main differences, our AMM model and that in~\cite{Nielsen23cpp}
have several differences.
A notable difference is that our model is based on Uniswap v2, while
\cite{Nielsen23cpp} is based on Uniswap v1.
In particular, this means that our AMMs can handle arbitrary token pairs, while 
in Dexter2 any AMM pairs a token with the blockchain native crypto-currency. 

At the best of our knowledge, \cite{Nielsen23cpp} and ours are currently 
the only mechanized formalizations of AMMs in a proof assistant.
Many other works study economic properties of AMMs that go beyond those proved in this paper. 
The works~\cite{Angeris21analysis} and~\cite{Angeris20aft} study, respectively, an AMM model based on Uniswap similar to ours, and a generalization of the model where the AMM is parameterised over a \emph{trading function} of the AMM reserves, which must remain constant before and after any swap transactions
(in Uniswap v1 and v2, the trading function is just the product between the AMM reserves).
The work~\cite{Bartoletti_2022} studies another generalization of Uniswap v2, where the relation between input and output tokens of swap transactions is determined by an arbitrary \emph{swap rate function}, studying the properties of this function that give rise to a sound economic mechanism of AMMs. 
While both~\cite{Angeris20aft} and~\cite{Bartoletti_2022} share the common goal
of providing general models of AMMs wherein to study their economic behaviour,
they largely diverge on the formalization:
\cite{Angeris20aft} is based on concepts related to convex optimization problems,
while~\cite{Bartoletti_2022} borrows formalization and reasoning techniques from concurrency theory.
The Lean 4 model proposed in this paper follows 
the formalization in~\cite{Bartoletti_2022}, which has the advantage of requiring far less mathematical dependencies: although Mathlib is equipped to reason about convex sets and functions\footnote{\url{https://leanprover-community.github.io/mathlib4_docs/Mathlib/Analysis/Convex/Function}}, it is currently lacking in advanced convex optimization definitions and results as those used in~\cite{Angeris20aft}.

Our AMM model is based on Uniswap v2, one of the most successful AMMs so far.
We briefly discuss some alternative AMM constructions.
Balancer~\cite{balancerpaper} generalizes the constant-product function
used by Uniswap to a constant (weighted geometric) mean
$f(r_1,\cdots,r_n) = \prod_{i=1}^n r_i^{w_i}$, 
where the weight $w_i$ reflects the relevance of a token $\tau_i$
in a tuple $(\tau_1,\cdots,\tau_n)$.
%
Curve~\cite{curvepaper} mixes constant-sum and constant-product functions,
aiming at a swap rate with small fluctuations for large amounts of swapped tokens. 
The work~\cite{Krishnamachari2021fab} studies a variant of the constant-product swap rate invariant, where the rate adjusts dynamically based on oracle prices, 
with the goal of reducing the need for arbitrage transactions.
%
Other approaches aiming at the same goal are studies in~\cite{virtualAMMbalances,Krishnamachari2021fab}, and implemented in~\cite{moonipaper}.
Extending our Lean formalization and results to these alternative AMM designs would require a substantial reworking of our model and proofs.


\section{Conclusions}
\label{sec:conclusions}
In this work we have provided a formalization of AMMs in Lean. Blockchain states are represented as structures containing wallets and AMMs, and transactions as dependent types equipped with a function that defines the state resulting from firing the transaction. Based on this, we have modeled the key economic notions of price, networth and gain. We have then focused on the economic properties of AMMs, constructing machine-checkable proofs. In~\Cref{th:swapgain} we have given an explicit formula for the economic gain of a user after firing a swap transaction. In~\Cref{th:suffopt} we have proved that the rational strategy for traders leads to the alignment between the AMM internal exchange rate and that given by price oracles. Finally, in~\Cref{th:arbitsol} we have derived the amount of tokens that a trader should sell to maximize the gain from a constant-product AMM.

\paragraph*{Design choices}
Before coming up with our Lean formalization, we have experimented with a few alternative definitions.
%
Currently, we use the two pairs of atomic token types $\left(\tau_0,\tau_1\right)$ and $\left(\tau_1, \tau_0\right)$ to represent the same minted token type. 
Initially, we used the type $M$ of sets of atomic tokens of cardinality 2, and modeled wallets of minted tokens by the type $M \to_0 \mathbb R_{\geq 0}$. However, using $M$ in the definition of AMMs turned out not to be as easy. The type $M \to_0 \mathbb R_{\geq 0}^2$ would obviously not work since we would not know which value corresponds to the reserve of which token. On the other hand, the dependent type $\left(m: M\right) \to_0 \code{Option}\left(m \to \mathbb R_{>0}\right)$ would work after defining $0 := \code{None}$, at the cost of losing the straightforward definition for supply of atomic tokens.
We have opted for the custom subtype $\mathbb R_{>0}$ to represent the positive reals, since they simplify writing certain definitions and proof passages (\eg, avoiding the use of garbage outputs in the definitions where negative inputs would not make sense). 
This choice however turned out to have some cons, since it makes using Mathlib more complex, and in some cases we have to coerce back to the reals anyway (\eg when reasoning about the gain).
Using Mathlib's reals would perhaps lead to a smoother treatment.

\paragraph*{Limitations}
Compared to real-world AMM implementations, 
our Lean formalization introduces a few simplifications, that overall contribute to keeping our proofs manageable. Bridging the gap with real AMMs would require several extensions, which we discuss below as directions for future work.
AMMs typically implement a trading fee $\phi \in [0,1]$ that represents the portion of the swap amount kept by the AMM. 
While modeling the fee would be easy
($\phi$ would be an additional parameter to $\code{SX}$ functions, to $\code{Swap}$ types, and to transactions $\code{Tx}$), it would require a major reworking of all the results that deal with swaps. 
Our swaps have \emph{zero-slippage}, in that either a swap gives exactly the amount of  tokens required by a user, or they are aborted. While on the one hand this is desirable (\eg, it rules out sandwich attacks), on the other hand it has drawbacks related to liveness, since a user may need to repeatedly send swap transactions until one is accepted. Real-world AMM implementations allow users to specify a slippage tolerance in the form of the minimum amount of tokens they expect from a swap. Extending our model to encompass slippage tolerance would require to add a parameter to each transaction type, and a minor reworking of the results.
%
Some AMM implementations allow users to create AMMs pairs involving \emph{minted} token types. Consequently, the tokens minted by these AMMs in general are ``nestings'' of token types. Extending our model in this direction would require 
to replace price oracles in our results with a suitable price function.



\bibliography{main}
\end{document}